\documentclass[journal]{IEEEtran}
\ifCLASSINFOpdf
\else
\fi
\usepackage{graphicx}
\usepackage{amssymb}
\usepackage{amsmath}
\usepackage{multicol}
\usepackage{stfloats}
\usepackage{cite}
\usepackage{booktabs}
\usepackage{makecell}
\newtheorem{theorem}{Theorem}

\newtheorem{lemma}{Lemma}

\newtheorem{assumption}{Assumption}
\newtheorem{remark}{Remark}
\newtheorem{proof}{Proof}
\newtheorem{problem}{Problem}

\usepackage{tikz,xcolor,hyperref}
\definecolor{lime}{HTML}{A6CE39}
\DeclareRobustCommand{\orcidicon}{%
	\begin{tikzpicture}
		\draw[lime, fill=lime] (0,0) 
		circle [radius=0.16] 
		node[white] {{\fontfamily{qag}\selectfont \tiny ID}}; 
		\draw[white, fill=white] (-0.0625,0.095) 
		circle [radius=0.007];	  
	\end{tikzpicture}
	\hspace{-2mm}}
\foreach \x in {A, ..., Z}{%
	\expandafter\xdef\csname orcid\x\endcsname{\noexpand\href{https://orcid.org/\csname orcidauthor\x\endcsname}{\noexpand\orcidicon}}
}
\makeatother

\begin{document}
\title{Distributed Multiconsensus Control of BESSs Based on Centrality of Eigenvectors}
\author{Yalin Zhang\orcidA{},
	Zhongxin Liu\orcidB{},~\IEEEmembership{Member,~IEEE,}
	Zengqiang Chen\orcidC{}
	\thanks{Manuscript received XX, XX; revised XX, XX;
		accepted XX, XX. This work is supported in part by the National Natural Science Foundation of China (Grant No. 92367105) and the General Terminal IC Interdisciplinary Science Center of Nankai University (Corresponding author: Zhongxin Liu.). 
		\par The authors are with the College of Artificial Intelligence, Nankai University, Tianjin 300350, and also with the Tianjin Key Laboratory of Interventional Brain-Computer Interface and Intelligent Rehabilitation, Nankai University, Tianjin 300350, China (e-mail: zhangyl@mail.nank.edu.cn; lzhx@nankai.edu.cn; chenzq@nankai.edu.cn).
}}
\markboth{IEEE TRANSACTIONS ON CIRCUITS AND SYSTEMS II:EXPRESS BRIEFS, VOL. XX, NO. XX, XX XX}%
{Zhang \MakeLowercase{\textit{et al.}}: Distributed Multiconsensus Control of BESSs Based on Centrality of Eigenvectors}
\maketitle
\begin{abstract}
	Secondary control and the State-of-Charge (SoC) balance control are important control objectives for battery energy storage systems (BESSs). In this brief, a communication weight allocation method based on the centrality of eigenvectors is designed for a connected and directed graph, which results in the adjacency matrix having a given eigenvector. Subsequently, a distributed secondary voltage controller and an SoC balancing controller are designed for droop-controlled BESSs to achieve voltage leader-following multiconsensus and SoC balancing, respectively. It is worth mentioning that under the designed voltage secondary control scheme, only a single leader is needed to achieve voltage multiconsensus control. In addition, the capacity information/droop coefficient does not need to be transmitted in the communication network to achieve power sharing according to capacity and SoC balance. For SoC balance control, the control gain is also well analyzed to ensure stability. The relevant simulations verify the effectiveness of the designed scheme.
\end{abstract}
\begin{IEEEkeywords}
Battery energy storage system, multi-agent systems, distributed control, voltage control, SoC balance.
\end{IEEEkeywords}
\IEEEpeerreviewmaketitle
\section{Introduction}
\IEEEPARstart{B}{attery} energy storage systems (BESSs) are widely equipped in smart grids due to their role in peak shaving and valley filling \cite{9815324,HOSSAINLIPU2022132188}. A BESS can serve as a backup power source and be activated in special situations such as insufficient energy supply or malfunctions of the power generation equipment \cite{9815324}. In addition, BESSs play an important in regulating bus voltage and frequency \cite{HOSSAINLIPU2022132188}.
\par In existing distributed solutions, it is a common objective for microgrids to restore the output voltage of each BESS to the rated value of the utility grid. With this as the goal, some distributed secondary control schemes with finite time stability \cite{9531546}, resistance to communication time-delay \cite{QIN2024110255}, network attacks \cite{MO202448,9989432}, and disturbances \cite{9676705,10092457}, event triggering mechanism \cite{9531546,10032563}, self triggering \cite{9989432,LIU2023108679} and based on optimal control \cite{9733168} have been developed. Without exception, these are distributed solutions designed based on multi-agent systems \cite{9968139} and the synchronization theory in complex networks \cite{10614883}, and by them, the secondary control for BESSs avoids expensive centralized control centers and improves its robustness. However, this seem very idealistic and without considering line impedance because if the output voltage of each BESS is the same as the voltage at PCC, power flow does not exist. It seems that there is still a lot of room for improvement in the secondary voltage control when considering line impedance. 
\par In addition, for BESSs, SoC balance \cite{10600085,Zhang2020} and power sharing \cite{GAMAGE2024111180,Zhang2020} are important control objectives. Therefore, many researchers have developed asymptotically stable controllers \cite{10600085,Zhang2020}, prescribed time controller \cite{10527370} to accelerate convergence rate. However, to solve the problem, in previous solutions, power, SoC, and capacity information are packaged and sent to their neighboring agents to design controllers, which still seem to be able to be further expanded on information privacy.
\par In light of these, two distributed multiconsensus control schemes are developed in this article to improve existing secondary control schemes and SoC balance schemes. The specific contributions are as follows:
\begin{enumerate}	
	\item A communication weight allocation scheme for a connected and directed graph is designed to ensure that the adjacency matrix has a given leading eigenvector, which allows the value of eigenvector centrality to be pre-set.
	\item Based on the above mechanism, a leader-follower secondary voltage controller and a SoC balance controller are designed for droop controlled BESSs. Unlike previous schemes \cite{9531546,QIN2024110255,MO202448,9989432,9676705,10092457,10032563,LIU2023108679,9733168,10266581}, under these controllers, the steady-state of each voltage secondary control can be preset, and the capacity information/droop coefficient no longer needs to be transmitted in the communication network.
\end{enumerate}
\par Some preliminaries are arranged in Section II. Next, a distributed secondary control scheme for BESSs and a distributed scheme that achieves SoC balance and power sharing are developed in Section III. In order to verify the effectiveness of the designed schemes, two simulation examples are arranged in Section IV. Finally, a conclusion is drawn in Section V.
\section{Preliminaries}

\subsection{Graph Theory}
In this article, MASs are configured to manage BESSs, where this relationship is one-to-one. The communication between agents can be characterized by a digraph $\mathcal{G}(\mathcal{V},\mathcal{E})$, where $\mathcal{V}$ is the set of nodes, naming agents, in the graph and $\mathcal{E}$ represents the set of edges, naming communication links. An adjacency matrix ${\cal A}=[\mathrm{z}_{ij}]$ is used to characterize the communication relationship between them. That is, if agent $i$ can access agent $j$, $\mathrm{z}_{ij}=1$, agent $j$ is viewed as a neighbor of agent $i$, and $(\mathrm{v}_j,\mathrm{v}_i)\in\mathcal{E}$; otherwise, $\mathrm{z}_{ij}=0$.
\par For the leader-following mode, a matrix ${\cal B}=\mathrm{diag\{\mathrm{b}_{11},\cdots,\mathrm{b}_{\mathrm{nn}}\}}$ is defined as follows. If agent $i$ can access the leader agent, $\mathrm{b}_{ii}= 1$; $\mathrm{b}_{ii}=0$ otherwise.
\begin{assumption}
	\label{assu1}
	The communication topology utilized in this article is directed and connected.
\end{assumption}
\subsection{Matrix Theory}
\par In this subsection, some classic matrix theories are introduced for subsequent use.
\begin{theorem}
	\label{THPF}
	\cite{VVG2017} Let $\rho({\cal A})$ be the spectral radius of the matrix ${\cal A}$, where ${\cal A}$ is non negative and irreducible. If $\rho({\cal A})$ is an eigenvalue of ${\cal A}$, it is a simple one of ${\cal A}$ and associated with a eigenvector with all components being positive.
\end{theorem}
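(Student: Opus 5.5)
This is the irreducible case of the Perron--Frobenius theorem, and I would prove it through the strictly positive matrix $M=(I+{\cal A})^{n-1}$, where $n$ is the order of ${\cal A}$. Because ${\cal A}$ is nonnegative and irreducible, for every pair $(i,j)$ there is a walk from $i$ to $j$ in the associated digraph of length at most $n-1$. Expanding $(I+{\cal A})^{n-1}$ binomially and keeping the identity term for $i=j$ then gives $M_{ij}>0$ for all $i,j$. The first consequence is that every nonzero nonnegative eigenvector of ${\cal A}$ is strictly positive. Indeed, if ${\cal A}x=\lambda x$ with $x\ge 0$, $x\neq 0$, then $\lambda\ge 0$ and $Mx=(1+\lambda)^{n-1}x$. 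Since $M$ is positive, $Mx>0$, and hence $x>0$.

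Next, I would show that $r:=\rho({\cal A})$ has a nonnegative eigenvector. I would use the Collatz--Wielandt function $f(x)=\min_{x_i>0}({\cal A}x)_i/x_i$ on the simplex $\{x\ge 0,\ \sum_i x_i=1\}$. Restricting to the compact set $M(\text{simplex})$, which consists of positive vectors, makes $f$ continuous there. Using $f(Mx)\ge f(x)$, I conclude that $f$ attains a maximum $r^*$ at some $z>0$. If ${\cal A}z-r^*z\ge 0$ were nonzero, applying $M$ would give $({\cal A}-r^*I)Mz>0$, which contradicts the maximality of $r^*$. Therefore ${\cal A}z=r^*z$. For any eigenpair ${\cal A}y=\mu y$, taking absolute values gives $|\mu||y|\le {\cal A}|y|$, so $|\mu|\le f(|y|)\le r^*$. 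This yields $r^*=r$, and the hypothesis that $\rho({\cal A})$ is an eigenvalue is confirmed automatically. By the first paragraph the eigenvector $z$ has all components positive.

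The last step, which is the main obstacle, is algebraic simplicity. I would start with geometric simplicity. If $w$ is another real eigenvector for $r$ that is not parallel to $z$, choose $t$ so that $z-tw\ge 0$ has a zero component without being identically zero. Then $z-tw$ is a nonnegative eigenvector with a zero entry, contradicting strict positivity. Complex eigenvectors reduce to this case through their real and imaginary parts.

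To exclude a Jordan block, suppose $({\cal A}-rI)u=z$. Let $v>0$ be the positive left eigenvector, which exists by the same argument applied to ${\cal A}^T$, itself irreducible. Multiplying on the left by $v^T$ gives $0=v^T({\cal A}-rI)u=v^Tz>0$, a contradiction. Hence $r$ is a simple eigenvalue with a positive eigenvector, which is the statement of Theorem~\ref{THPF}.
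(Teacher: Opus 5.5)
Your proof is correct and complete. The paper gives no argument for this statement; it simply imports it from \cite{VVG2017}. So the comparison is between a citation and your self-contained Wielandt-style proof, which runs as follows: strict positivity of $(I+\mathcal{A})^{n-1}$; the Collatz--Wielandt max--min construction to produce a positive eigenvector for $\rho(\mathcal{A})$; the ``shift until a zero entry appears'' argument for geometric simplicity; and the positive left eigenvector to rule out a Jordan chain.

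Your route gives two things the citation leaves implicit. First, it proves more than is stated: the hypothesis ``if $\rho(\mathcal{A})$ is an eigenvalue'' is redundant, since your Collatz--Wielandt step shows that $\rho(\mathcal{A})$ always is one. Second, your ingredients yield in one line the converse fact that the paper actually relies on in Lemma~\ref{lemwt}. Suppose $\mathcal{A}x=\lambda x$ with $x\ge 0$, $x\neq 0$, and let $v>0$ be the left Perron vector. Then $\lambda v^{\mathrm{T}}x=v^{\mathrm{T}}\mathcal{A}x=\rho(\mathcal{A})v^{\mathrm{T}}x$ with $v^{\mathrm{T}}x>0$, so $\lambda=\rho(\mathcal{A})$. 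Applied to the irreducible matrix $\mathcal{A}\circ\Omega$, this is what justifies calling a prescribed positive eigenvector its \emph{leading} one.

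Two details are worth writing out explicitly. The first is the scale invariance of $f$: writing $S$ for the simplex, it gives $\max_{M(S)}f=\sup_{S}f$, which you use both in the maximality contradiction for $Mz$ and in $f(|y|)\le r^*$. The second is the choice $t=\min_{w_i>0}z_i/w_i$ in the geometric-simplicity step, after replacing $w$ by $-w$ if necessary so that it has a positive entry.
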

\begin{lemma}
	\label{lemeq}
	\cite{DHWAW1951} Let ${\cal A}$ and $\mathrm{b}$ be an $\mathrm{n}$-dimensional square matrix and an $\mathrm{n}$-dimensional column vector, respectively. One can find an $\mathrm{n}$-dimensional column vector $x$ or $y$ that makes only one of the following valid:
	\begin{enumerate}
		\item ${\cal A}x=\mathrm{b}$ with $x>0$,
		\item ${\cal A}^\mathrm{T}y\ge 0$ and $y^\mathrm{T}\mathrm{b}\ge 0$.
	\end{enumerate}
\end{lemma}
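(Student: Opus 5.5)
The plan is to read Lemma~\ref{lemeq} as a theorem of the alternative of Stiemke--Farkas type and prove it by homogenisation plus a separating-hyperplane argument. As literally written, alternative 2) is always satisfied by $y=0$, so I will prove the nondegenerate version that the later weight-allocation argument needs. It states that exactly one of the following holds: (1) $\mathcal{A}x=\mathrm{b}$ for some $x>0$; (2) there is $y$ with $\mathcal{A}^\mathrm{T}y\ge 0$, $\mathcal{A}^\mathrm{T}y\neq 0$ and $y^\mathrm{T}\mathrm{b}\le 0$. If $\mathrm{b}=0$ the sign of $y^\mathrm{T}\mathrm{b}$ is immaterial. The proof has two parts: the two alternatives cannot hold together, and at least one of them must hold.

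\emph{Mutual exclusion.} Suppose $x>0$ solves 1) and $y$ satisfies 2). Then
\begin{equation*}
y^\mathrm{T}\mathrm{b}=y^\mathrm{T}\mathcal{A}x=(\mathcal{A}^\mathrm{T}y)^\mathrm{T}x>0 .
\end{equation*}
The inequality is strict because $\mathcal{A}^\mathrm{T}y$ is nonnegative and nonzero while every entry of $x$ is positive. This contradicts $y^\mathrm{T}\mathrm{b}\le 0$, so 1) and 2) cannot both hold.

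\emph{Exhaustiveness.} Assume 1) fails. Consider the convex cone $K=\{(\mathcal{A}x-t\mathrm{b},\,x):x\in\mathbb{R}^\mathrm{n},\,t>0\}$. Failure of 1) is equivalent to the statement that no point of $K$ has first block zero and second block strictly positive. Indeed, given such a point, $x/t$ would be a positive solution of $\mathcal{A}x=\mathrm{b}$. Equivalently, the subspace-like set $L=\{(\mathcal{A}x-t\mathrm{b},x)\}$ with $t>0$ misses the open cone $\{0\}\times\mathbb{R}^\mathrm{n}_{>0}$. I then apply the separating hyperplane theorem for a convex set and a disjoint open convex cone. It yields a nonzero vector $(y,z)$ with $z\ge 0$ and $z\neq 0$ such that $y^\mathrm{T}(\mathcal{A}x-t\mathrm{b})+z^\mathrm{T}x\le 0$ for all $x$ and all $t>0$. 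Since $x$ ranges over all of $\mathbb{R}^\mathrm{n}$, linearity forces $-\mathcal{A}^\mathrm{T}y=z$; after the sign change $y\mapsto -y$, this gives $\mathcal{A}^\mathrm{T}y=z\ge 0$ and $z\neq0$. Taking $x=0$ and letting $t>0$ vary fixes the sign of $y^\mathrm{T}\mathrm{b}$, which is exactly 2).

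\emph{Main obstacle.} The delicate step is the separation itself. I need the normal component $z$ to be nonzero; strict positivity of $x$ is what forces this, since the target cone is open. I also need to track signs carefully so that the resulting inequality on $y^\mathrm{T}\mathrm{b}$ comes out in the direction stated. My first attempt would be to quote Motzkin's transposition theorem directly, which packages exactly this strict-versus-weak separation. Failing that, I would apply Farkas' lemma to the system $\mathcal{A}x=\mathrm{b}$, $x\ge \varepsilon\mathbf{1}$, and argue by compactness over $\varepsilon\downarrow0$.
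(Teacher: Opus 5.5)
The paper does not prove this lemma; it only cites \cite{DHWAW1951}, so your argument has to stand on its own. You are right that alternative 2) as printed is vacuous ($y=0$ always satisfies it), and that the correct sign is $y^\mathrm{T}\mathrm{b}\le 0$. However, the repaired alternative you chose is too restrictive. Demanding $\mathcal{A}^\mathrm{T}y\neq 0$ throws away the certificates with $\mathcal{A}^\mathrm{T}y=0$ and $y^\mathrm{T}\mathrm{b}<0$, which are exactly the ones witnessing $\mathrm{b}\notin\mathrm{range}(\mathcal{A})$. Take $\mathcal{A}=0$ and $\mathrm{b}\neq 0$. Then 1) fails, and your 2) also fails, since $\mathcal{A}^\mathrm{T}y=0$ for every $y$. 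A less degenerate instance is $\mathcal{A}=\left[\begin{smallmatrix}1&-1\\0&0\end{smallmatrix}\right]$, $\mathrm{b}=[0,1]^\mathrm{T}$: here $\mathcal{A}^\mathrm{T}y\ge 0$ forces $y_1=0$ and hence $\mathcal{A}^\mathrm{T}y=0$. So the exhaustiveness half of your statement is false, and no separation argument can establish it.

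The step that fails is the claim that separation delivers a normal $(y,z)$ with $z\neq 0$. The set $\{0\}\times\mathbb{R}^\mathrm{n}_{>0}$ has empty interior in $\mathbb{R}^{2\mathrm{n}}$, so the open-cone separation theorem does not apply. What is available is proper separation of two relatively open convex cones, which only guarantees that the hyperplane does not contain both. Combined with $z=-\mathcal{A}^\mathrm{T}y$ and $y^\mathrm{T}\mathrm{b}\ge 0$ (before your sign flip), this yields $z\neq 0$ \emph{or} $y^\mathrm{T}\mathrm{b}>0$, and nothing more. With $\mathcal{A}=0$, every separating normal has $z=-\mathcal{A}^\mathrm{T}y=0$, and the nonzeroness sits entirely in $y$ (e.g.\ $y=\mathrm{b}$). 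Positivity of $x$ cannot force $z\neq 0$, because the $y$-block of the normal pairs with the zero block of the cone.

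The correct alternative is $\mathcal{A}^\mathrm{T}y\ge 0$, $y^\mathrm{T}\mathrm{b}\le 0$, $(\mathcal{A}^\mathrm{T}y,\,y^\mathrm{T}\mathrm{b})\neq(0,0)$. This is Stiemke's lemma for $[\mathcal{A},\,-\mathrm{b}]$ acting on $(x,t)>0$, and it is also what Motzkin's transposition theorem returns if you apply it as you propose. Your mutual-exclusion computation extends to it: if $\mathcal{A}^\mathrm{T}y=0$, then $y^\mathrm{T}\mathrm{b}=y^\mathrm{T}\mathcal{A}x=0$. Your separation argument, run with proper separation, proves exactly this version.

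Alternatively, your version is true under the extra hypothesis $\mathrm{b}\in\mathrm{range}(\mathcal{A})$. That hypothesis, not openness, is what rules out $z=0$.

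Your $\varepsilon\downarrow 0$ fallback has the same weak point. Normalized Farkas certificates converge only to some $y$ with $\mathcal{A}^\mathrm{T}y\ge 0$ and $y^\mathrm{T}\mathrm{b}\le 0$, and nondegeneracy is lost in the limit. You would first have to project the certificates off $\ker\mathcal{A}^\mathrm{T}\cap\mathrm{b}^\perp$.
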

\begin{lemma}
	\label{nons}
	\cite{BIERKENS2014191} Assume that matrix $\cal M$ is singular and has a simple zero eigenvalue, $\nu$ and $\omega$ are the left and right eigenvectors corresponding to that eigenvalue, respectively. If there are two vectors $x$ and $y$ that make $(\nu^Tx)(y^T\omega)\neq0$ true, then ${\cal M}+xy^T$ is a non singular matrix.
\end{lemma}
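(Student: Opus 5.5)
We must show ${\cal M}+xy^{\mathrm T}$ has trivial kernel. Suppose $({\cal M}+xy^{\mathrm T})z=0$ for some vector $z$, and aim to show $z=0$. The plan uses two facts: the left eigenvector $\nu$ annihilates the range of ${\cal M}$, and the kernel of ${\cal M}$ is one-dimensional and spanned by $\omega$, because zero is a simple eigenvalue. The hypothesis $(\nu^{\mathrm T}x)(y^{\mathrm T}\omega)\neq0$ gives both $\nu^{\mathrm T}x\neq0$ and $y^{\mathrm T}\omega\neq 0$, and these are exactly what the two steps need.

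\emph{First step:} multiply the relation ${\cal M}z+x(y^{\mathrm T}z)=0$ on the left by $\nu^{\mathrm T}$. Since $\nu^{\mathrm T}{\cal M}=0$, this gives $(\nu^{\mathrm T}x)(y^{\mathrm T}z)=0$. Because $\nu^{\mathrm T}x\neq 0$, we get $y^{\mathrm T}z=0$.

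\emph{Second step:} substitute $y^{\mathrm T}z=0$ back into the relation to obtain ${\cal M}z=0$. Simplicity of the zero eigenvalue means its algebraic multiplicity is one, so its geometric multiplicity is one as well, and $\ker{\cal M}=\mathrm{span}\{\omega\}$. Hence $z=c\,\omega$ for some scalar $c$. Then $0=y^{\mathrm T}z=c\,(y^{\mathrm T}\omega)$, and since $y^{\mathrm T}\omega\neq0$, we conclude $c=0$ and $z=0$. Therefore ${\cal M}+xy^{\mathrm T}$ is nonsingular.

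The only delicate point is the claim that the kernel is one-dimensional. This follows because geometric multiplicity never exceeds algebraic multiplicity, so simplicity is enough; no diagonalizability of ${\cal M}$ is needed.

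As a cross-check, the matrix determinant lemma for singular matrices gives $\det({\cal M}+xy^{\mathrm T})=y^{\mathrm T}\mathrm{adj}({\cal M})x$. For a simple zero eigenvalue, $\mathrm{adj}({\cal M})$ is a nonzero multiple of $\omega\nu^{\mathrm T}$, so the determinant is a nonzero multiple of $(y^{\mathrm T}\omega)(\nu^{\mathrm T}x)$, which is nonzero. The kernel argument above is the more elementary route and is the one I would write out.
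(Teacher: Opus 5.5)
Your proof is correct. Pairing a kernel vector $z$ with $\nu^{\mathrm T}$ forces $y^{\mathrm T}z=0$, hence ${\cal M}z=0$, hence $z=c\,\omega$, and $y^{\mathrm T}\omega\neq0$ gives $c=0$. You also handle the step from algebraic to geometric simplicity properly.

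The paper gives no proof of its own here; the lemma is simply imported from \cite{BIERKENS2014191}. So there is no in-paper argument to compare with, but the contrast between your two routes is worth noting.

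Your kernel argument never uses $\nu^{\mathrm T}\omega\neq0$, so it holds whenever $\ker{\cal M}$ is one-dimensional. One more line gives the converse: if $\nu^{\mathrm T}x=0$ then $\nu^{\mathrm T}({\cal M}+xy^{\mathrm T})=0$, and if $y^{\mathrm T}\omega=0$ then $({\cal M}+xy^{\mathrm T})\omega=0$. The condition is therefore necessary as well as sufficient.

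The determinant route you sketch can instead be made quantitative. Comparing the traces of $\mathrm{adj}({\cal M})$ and $c\,\omega\nu^{\mathrm T}$ gives $c=\prod_{\lambda\neq0}\lambda/(\nu^{\mathrm T}\omega)$; this step does use algebraic simplicity. Hence $\det({\cal M}+xy^{\mathrm T})=\frac{\prod_{\lambda\neq0}\lambda}{\nu^{\mathrm T}\omega}(\nu^{\mathrm T}x)(y^{\mathrm T}\omega)$. For an irreducible singular M-matrix with $\nu,\omega>0$ the prefactor is positive. So the determinant is positive for nonzero $x,y$ that are both entrywise nonnegative or both entrywise nonpositive.

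That sign information cannot come from nonsingularity alone, and it is what the paper relies on afterwards: Lemma~\ref{Mm}, and the claim in the secondary-voltage proof that $F$ has positive leading minors. If you want the lemma to serve that downstream use, the determinant version is the one to write out. For the statement as given, your kernel argument is the cleaner and slightly more general proof.
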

\begin{lemma}
	\label{Mm}
	\cite{BIERKENS2014191} The result of Lemma \ref{nons} can be further deduced that ${\cal M}+xy^T>0$ if ${\cal M}$ is irreducible and a M-matrix and $\nu>0$, $\omega>0$, $x\le0$, and $y\le 0$.
\end{lemma}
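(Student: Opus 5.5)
I read the conclusion ``${\cal M}+xy^T>0$'' in the sense used in \cite{BIERKENS2014191}: the perturbed matrix is ``positive'' in the spectral/minor sense. Concretely, $\det({\cal M}+xy^T)>0$, and in fact every principal minor of ${\cal M}+xy^T$ is positive. The hypotheses $x\le 0$, $y\le 0$ make $xy^T\ge 0$, so ${\cal M}+xy^T$ is an irreducible singular M-matrix plus a nonnegative rank-one term. I will assume $x\neq 0$ and $y\neq 0$; otherwise the claim is false, since the perturbation would vanish. The plan has two parts: the full determinant, which also recovers Lemma~\ref{nons} with a sign, and then the proper principal minors.

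\emph{Step 1 (full determinant).} I use the matrix determinant lemma in adjugate form,
\begin{equation*}
\det({\cal M}+xy^T)=\det{\cal M}+y^T\operatorname{adj}({\cal M})\,x = y^T\operatorname{adj}({\cal M})\,x .
\end{equation*}
Since ${\cal M}$ has a simple zero eigenvalue, $\operatorname{rank}{\cal M}=n-1$. Hence $\operatorname{adj}({\cal M})$ has rank one. The identities ${\cal M}\operatorname{adj}({\cal M})=\operatorname{adj}({\cal M}){\cal M}=0$ force $\operatorname{adj}({\cal M})=c\,\omega\nu^T$ for some scalar $c\neq 0$. Taking traces gives $c\,\nu^T\omega=\operatorname{tr}\operatorname{adj}({\cal M})$, and this trace equals the product of the nonzero eigenvalues of ${\cal M}$. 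Because ${\cal M}$ is an M-matrix, those eigenvalues have positive real parts and occur in conjugate pairs, so their product is positive. Moreover $\nu^T\omega>0$ since $\nu,\omega>0$ (Theorem~\ref{THPF} applied to $sI-{\cal M}$). Therefore $c>0$, and
\begin{equation*}
\det({\cal M}+xy^T)=c\,(y^T\omega)(\nu^Tx).
\end{equation*}
Each factor $(y^T\omega)$ and $(\nu^Tx)$ is strictly negative, because $x,y\le 0$ are nonzero and $\nu,\omega>0$. Hence the determinant is strictly positive.

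\emph{Step 2 (proper principal minors).} Take $J\subsetneq\{1,\dots,n\}$. Because ${\cal M}$ is an irreducible M-matrix, ${\cal M}_J$ is a nonsingular M-matrix, so $\det{\cal M}_J>0$ and ${\cal M}_J^{-1}\ge 0$. The matrix determinant lemma then gives
\begin{equation*}
\det({\cal M}_J+x_Jy_J^T)=\det{\cal M}_J\,\bigl(1+y_J^T{\cal M}_J^{-1}x_J\bigr).
\end{equation*}
Here $y_J^T{\cal M}_J^{-1}x_J=(-y_J)^T{\cal M}_J^{-1}(-x_J)\ge 0$, so this minor is positive as well. Combining Steps 1 and 2, ${\cal M}+xy^T$ is a P-matrix and in particular nonsingular, which is the stated conclusion.

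The main obstacle is Step 1: identifying $\operatorname{adj}({\cal M})$ as a \emph{positive} multiple of $\omega\nu^T$. Getting the sign of $c$ right relies on the spectral structure of singular M-matrices. If the trace argument feels delicate, I would instead show $\operatorname{adj}({\cal M})>0$ entrywise directly. Each diagonal cofactor of ${\cal M}$ is the determinant of a proper principal submatrix, which is positive by the fact used in Step 2. So the diagonal of $c\,\omega\nu^T$ is positive, which forces $c>0$.
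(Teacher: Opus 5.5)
The paper gives no proof of this lemma; it is quoted from \cite{BIERKENS2014191}. Your argument therefore has to stand on its own, and it does: both steps are correct and together form a complete proof. Reading ``${\cal M}+xy^T>0$'' as ``all principal minors are positive'' is the right choice. That is the P-matrix statement of the cited work, and the matrix is in general neither entrywise positive nor even a Z-matrix. Your extra assumption $x\neq0$, $y\neq0$ is exactly what the inherited hypothesis $(\nu^Tx)(y^T\omega)\neq0$ of Lemma~\ref{nons} amounts to when $x,y\le0$ and $\nu,\omega>0$.

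The phrase ``can be further deduced'' from Lemma~\ref{nons} suggests a slightly different treatment of the full determinant, which is worth comparing with your Step~1.
\begin{itemize}
\item For $\varepsilon>0$, ${\cal M}+\varepsilon\mathrm{I_n}$ is a nonsingular M-matrix. Your Step~2 computation then gives $\det({\cal M}+\varepsilon\mathrm{I_n}+xy^T)\ge\det({\cal M}+\varepsilon\mathrm{I_n})>0$.
\item Letting $\varepsilon\to0^+$ gives $\det({\cal M}+xy^T)\ge0$, and Lemma~\ref{nons} upgrades this to strict positivity.
\end{itemize}
That route handles every principal minor with the same Sherman--Morrison estimate and never has to determine the sign of $c$, but it uses Lemma~\ref{nons} as a black box.

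Your adjugate argument is self-contained and yields more. The identity $\det({\cal M}+xy^T)=\frac{\prod_{\lambda\neq0}\lambda}{\nu^T\omega}\,(y^T\omega)(\nu^Tx)$ holds whenever $0$ is a simple eigenvalue. It therefore proves Lemma~\ref{nons} itself, even as an equivalence, together with the sign. Of your two ways of showing $c>0$, the diagonal-cofactor one is the more economical, since it reuses exactly the fact from Step~2.
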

\begin{lemma}
	\label{MH}
	\cite{doi:10.1137/20M131802X} A M-matrix ${\cal M}$ is considered to be Hurwitz if and only if ${\cal M}<0$.
\end{lemma}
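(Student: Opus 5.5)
Since the statement is quoted from \cite{doi:10.1137/20M131802X} with compressed notation, I first fix how I read it. I take "M-matrix ${\cal M}$" in the paper's sign convention, namely a Metzler matrix ${\cal M}=\mathrm{B}-s I$ with $\mathrm{B}\ge 0$ entrywise and $s\ge 0$. I read "${\cal M}<0$" as the existence of a vector $v>0$ with ${\cal M}v<0$ componentwise. I flag this reading of "${\cal M}<0$" as an assumption; the rest of the plan depends on it. Under this reading the lemma is the classical characterization that ${\cal M}$ is Hurwitz if and only if $-{\cal M}$ is a nonsingular M-matrix. The spectrum of ${\cal M}$ is $\{\lambda-s:\lambda\in\sigma(\mathrm{B})\}$, so the whole proof reduces to comparing $\rho(\mathrm{B})$ with $s$.

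\emph{Step 1 (Hurwitz $\Leftrightarrow \rho(\mathrm{B})<s$).} First I need that $\rho(\mathrm{B})$ is itself an eigenvalue of $\mathrm{B}$. If $\mathrm{B}$ is irreducible this is Theorem \ref{THPF}. In general I apply Theorem \ref{THPF} to the positive (hence irreducible) matrix $\mathrm{B}+\varepsilon \mathbf{1}\mathbf{1}^\mathrm{T}$ and let $\varepsilon\to0$, using continuity of eigenvalues. This gives that $\rho(\mathrm{B})-s$ is a real eigenvalue of ${\cal M}$. Every other eigenvalue satisfies $\mathrm{Re}(\lambda-s)\le|\lambda|-s\le\rho(\mathrm{B})-s$. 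Hence ${\cal M}$ is Hurwitz exactly when $\rho(\mathrm{B})<s$.

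\emph{Step 2 (sufficiency).} Suppose $v>0$ and ${\cal M}v<0$, i.e.\ $\mathrm{B}v<sv$. Put $D=\mathrm{diag}(v)$. Every row sum of the nonnegative matrix $D^{-1}\mathrm{B}D$ equals $(\mathrm{B}v)_i/v_i<s$. So $\rho(\mathrm{B})=\rho(D^{-1}\mathrm{B}D)\le\|D^{-1}\mathrm{B}D\|_\infty<s$, and Step 1 gives that ${\cal M}$ is Hurwitz.

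\emph{Step 3 (necessity).} If ${\cal M}$ is Hurwitz, then $\rho(\mathrm{B}/s)<1$ by Step 1. The Neumann series $-{\cal M}^{-1}=\sum_{k\ge0}\mathrm{B}^k/s^{k+1}$ therefore converges, and it is entrywise nonnegative with diagonal at least $1/s$. Set $v=-{\cal M}^{-1}\mathbf{1}$. Then $v\ge \mathbf{1}/s>0$ and ${\cal M}v=-\mathbf{1}<0$, which is the required vector.

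The main obstacle is not computational. It is fixing the sign convention, and then justifying that $\rho(\mathrm{B})$ is an eigenvalue when $\mathrm{B}$ is reducible, since Theorem \ref{THPF} as stated in the paper covers only the irreducible case. I would handle the latter by the perturbation-and-limit argument in Step 1.
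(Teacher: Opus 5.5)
The paper gives no proof of Lemma \ref{MH}. It is simply imported from the cited reference, and as literally written it cannot be checked: in the standard convention an M-matrix $sI-\mathrm{B}$ with $s\ge\rho(\mathrm{B})$ is never Hurwitz, and ``${\cal M}<0$'' is not defined. Your proposal therefore takes a genuinely different route. You fix a reading under which the statement is true (${\cal M}$ Metzler, and ${\cal M}<0$ meaning some $v>0$ has ${\cal M}v<0$), prove it from scratch, and the proof is correct. Step 1 correctly identifies the spectral abscissa of ${\cal M}$ as $\rho(\mathrm{B})-s$. Step 2's diagonal scaling gives $\rho(\mathrm{B})\le\|D^{-1}\mathrm{B}D\|_\infty<s$; this direction does not even need Perron--Frobenius. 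In Step 3, $\rho(\mathrm{B})<s$ forces $s>0$, so the Neumann series is legitimate and $v=-{\cal M}^{-1}\mathbf{1}\ge\mathbf{1}/s$. Your reading also fits how the paper uses the lemma, since it is applied to the Metzler matrix $-F/\rho({\cal A})$.

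What the citation leaves implicit is the link to the hypothesis the paper actually verifies, and you would need it to plug your version in. The paper shows that $F$ has positive leading principal minors (its ``$F>0$''). So you also need the classical Z-matrix fact that positive leading principal minors give $F^{-1}\ge 0$, e.g.\ via an LU factorization into triangular Z-matrices with positive diagonals. Then $v=F^{-1}\mathbf{1}>0$ satisfies $Fv>0$.

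One citation detail: Theorem \ref{THPF} as stated in the paper is conditional (``if $\rho({\cal A})$ is an eigenvalue''). It therefore does not by itself tell you that $\rho(\mathrm{B})$ is an eigenvalue. Invoke the standard Perron--Frobenius theorem instead, which asserts this for every nonnegative matrix. That makes your $\varepsilon$-perturbation optional, though the argument is valid as written.
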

\section{Design of Distributed Multiconsensus Controller of BESSs}
\begin{figure}
	\centering
	\includegraphics[width=8cm]{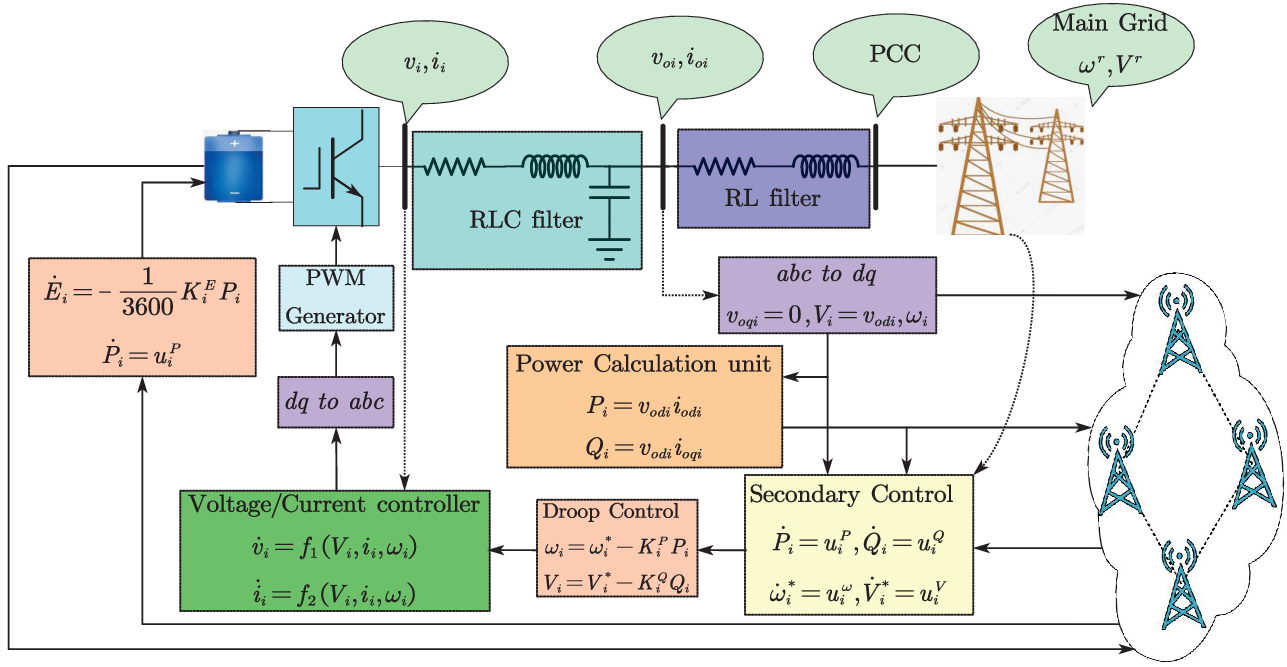}
	\caption{A block diagram for distributed control of a drop-controlled BESS}
	\label{AC}
\end{figure}
The hierarchical control mode is widely used in BESSs, as shown in Fig. \ref{AC}. A common practice is to ignore the voltage loop and current loop \cite{9531546,Zhang2020}, where droop control method is adopted as the primary control scheme. Thus, a battery indexed by $i$ is subjected to the following dynamics \cite{9815324,10600085,Zhang2020,GAMAGE2024111180}
\begin{subequations}
	\begin{equation}
		\dot{E}_{i}=-\frac{1}{\mathrm{T}}\mathrm{K}_{i}^{P}{P}_{i},
	\end{equation}
	\begin{equation}
		\mathrm{K}_{i}^{P}\dot{P}_{i}=u_{i}^{P},
	\end{equation}
\end{subequations}
where $E_i$ and $P_i$ are the SoC and the output power of BESS $i$, $\mathrm{K}_i^P$ denotes the capacity of BESS $i$, $\mathrm{T}$ is the time scale, $u_{i}^{P}$ is the control input.
\par An inverter-interfaced BESS driven by voltage droop control can be described by a simplified dynamic model as follows,
\begin{subequations}
	\begin{equation}
		V_i=V_{\mathrm{n},i}-\mathrm{K}_i^QQ_i,
	\end{equation}
	\begin{equation}
		\label{Qu}
		\dot{Q}_{i}=u_i^Q,
	\end{equation}
	\begin{equation}
		\label{Vn}
		\dot{V}_{\mathrm{n},i}=u_i^V+\mathrm{K}_i^Qu_i^Q,
	\end{equation}
\end{subequations}
where $V_{\mathrm{n},i}$, $V_i$, and $Q_i$ are the nominal and output voltage, reactive power, respectively, $\mathrm{K}_i^Q$, $u_i^V$, and $u_i^Q$ are a droop coefficient and the control inputs, respectively.
\par A communication weight $\Omega$ can be found by Lemma \ref{lemwt} such that a given vector $\mathrm{v}_1^{\prime}$ is the leading eigenvector of ${\cal A}\circ \Omega$, where $\circ$ denotes Hadamard product.
\begin{lemma}
	\label{lemwt}
	One can find a communication weight matrix $\Omega$ so that ${\cal A}\circ \Omega$ has the leading eigenvector $\mathrm{v}_1^{\prime}$ associated with $\rho({\cal A})$, where the elements of the matrix $\Omega$ are all non negative.
\end{lemma}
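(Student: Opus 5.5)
A direct construction, with the eigenvalue relation enforced row by row, is the natural route. The weighted adjacency is $\mathcal{W}=\mathcal{A}\circ\Omega$ with entries $w_{ij}=\mathrm{z}_{ij}\omega_{ij}$. We want $\mathcal{W}\mathrm{v}_1^{\prime}=\lambda\mathrm{v}_1^{\prime}$ for some $\lambda>0$. Since the leading eigenvector must be positive by Theorem \ref{THPF}, we take $\mathrm{v}_1^{\prime}=[v_1,\dots,v_{\mathrm{n}}]^{\mathrm{T}}>0$. We also fix the target eigenvalue; for definiteness we use $\lambda=\rho(\mathcal{A})$, as in the statement, although any positive constant works the same way.

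The eigenvalue equation decouples across rows: row $i$ only requires
\begin{equation*}
\sum_{j\in N_i}\omega_{ij}v_j=\lambda v_i ,
\end{equation*}
where $N_i=\{j:\mathrm{z}_{ij}=1\}$. Assumption \ref{assu1} says the digraph is (strongly) connected, so every node has at least one in-neighbour and $N_i\neq\emptyset$. An explicit nonnegative solution is the uniform choice
\begin{equation*}
\omega_{ij}=\frac{\lambda v_i}{|N_i|\,v_j}\quad (j\in N_i),
\end{equation*}
with $\omega_{ij}=0$ (or any value) when $\mathrm{z}_{ij}=0$. Equivalently, $\omega_{ij}=\lambda v_i/\sum_{k\in N_i}v_k$. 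Each row equation is then satisfied exactly, and every $\omega_{ij}$ is positive on the edges.

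If one prefers to present the row step through the theorem of alternatives, Lemma \ref{lemeq} applies to each row system $a^{\mathrm{T}}x=\lambda v_i$, where $a=(v_j)_{j\in N_i}>0$. The alternative would be some scalar $y$ with $ya\ge 0$ and $y\lambda v_i<0$. This is impossible: $ya\ge0$ with $a>0$ forces $y\ge0$, and then $y\lambda v_i\ge 0$. So a positive solution $x$ exists. The explicit formula above makes this step routine.

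It remains to show that $\mathrm{v}_1^{\prime}$ is the \emph{leading} eigenvector, i.e.\ that $\lambda=\rho(\mathcal{W})$. Because the graph is strongly connected and $w_{ij}>0$ exactly on the edges, $\mathcal{W}$ is nonnegative and irreducible. A standard Perron--Frobenius fact is that a positive eigenvector of a nonnegative irreducible matrix must belong to $\rho$. To see this, let $u>0$ be the left Perron vector, which exists by Theorem \ref{THPF} applied to $\mathcal{W}^{\mathrm{T}}$. Then
\begin{equation*}
\rho(\mathcal{W})\,u^{\mathrm{T}}\mathrm{v}_1^{\prime}=u^{\mathrm{T}}\mathcal{W}\mathrm{v}_1^{\prime}=\lambda\, u^{\mathrm{T}}\mathrm{v}_1^{\prime},
\end{equation*}
and since $u^{\mathrm{T}}\mathrm{v}_1^{\prime}>0$ we get $\lambda=\rho(\mathcal{W})$. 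Simplicity of this eigenvalue, again from Theorem \ref{THPF}, shows the leading eigenvector is unique up to scaling.

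The main obstacle is interpretive rather than technical: the statement attaches the eigenvector to "$\rho(\mathcal{A})$". With the scaling $\lambda=\rho(\mathcal{A})$ built into $\omega_{ij}$, the weighted matrix has spectral radius exactly $\rho(\mathcal{A})$, and this is the reading we would adopt. The other point needing care is irreducibility after weighting. Positivity of every $\omega_{ij}$ on edges preserves the zero pattern of $\mathcal{A}$, so irreducibility carries over, and this is exactly why the uniform positive split is used rather than an arbitrary nonnegative solution of the row equations.
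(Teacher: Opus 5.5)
Your proof is correct, and it takes a more direct route than the paper. The paper proceeds as follows:
\begin{itemize}
\item it vectorizes the condition as $(\mathrm{v}_1^{\prime\mathrm{T}}\otimes\mathrm{I_n})(\alpha\circ\omega)=\rho({\cal A})\mathrm{v}_1^{\prime}$;
\item it deletes the columns belonging to non-edges to get $\hat{\mathrm{M}}\hat\omega=\rho({\cal A})\mathrm{v}_1^{\prime}$;
\item it argues solvability by a rank comparison and positivity via Lemma \ref{lemeq};
\item it then adopts the minimum-norm solution $\hat{\mathrm{M}}^+\rho({\cal A})\mathrm{v}_1^{\prime}$.
\end{itemize}

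You instead observe that this system decouples row by row, because each column of $\hat{\mathrm{M}}$ has a single nonzero entry. Solvability is then obvious once every node has an in-neighbour, and you write down a weight that is strictly positive on every edge. This gives you two things the paper lacks:
\begin{itemize}
\item You supply the Perron--Frobenius step, via the left Perron vector, showing that the prescribed eigenvalue really is $\rho({\cal A}\circ\Omega)$. The paper only asserts this as a ``necessary and sufficient condition''.
\item Your explicit construction sidesteps the paper's overclaim that \emph{all} solutions of the row equations are positive. With two in-neighbours, one weight can be made negative while the other compensates.
\end{itemize}

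What the paper's formulation buys is a canonical choice that it reuses later. Your decoupling shows that this choice is $\omega_{ij}=\rho({\cal A})v_iv_j/\sum_{k\in N_i}v_k^2$, which is also positive on edges. So your argument validates the paper's choice as well, although your uniform split produces a different weight matrix.

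Two small slips:
\begin{itemize}
\item The weight $\lambda v_i/\sum_{k\in N_i}v_k$ is not ``equivalent'' to $\lambda v_i/(|N_i|v_j)$. It is a second, different valid solution.
\item The Farkas alternative you quote (with $y\lambda v_i<0$) only certifies a solution $x\ge 0$, not $x>0$. This is harmless, since your explicit formula already gives strict positivity.
\end{itemize}
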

\begin{proof}
	The sufficient and necessary condition for $\mathrm{v}_1^{\prime}$ to be the leading eigenvector associated with $\rho({\cal A})$ of ${\cal A}\circ \Omega$ is
	\begin{equation}
		({\cal A}\circ \Omega)\mathrm{v}_1^{\prime}=\rho({\cal A})\mathrm{v}_1^{\prime}.
	\end{equation}
	By the vectorization defined in \cite{TOMASELLI2024111638}, one can obtain that
	\begin{equation}
		\label{M1}
		({\mathrm{v}_1^{\prime}}^\mathrm{T}\otimes \mathrm{I_n})(\alpha\circ \omega)=\rho({\cal A})\mathrm{v}_1^{\prime},
	\end{equation}
	where $\alpha=\mathrm{vec}({\cal A})$ and $\omega=\mathrm{vec}(\Omega)$. Denote $\mathrm{M}=({\mathrm{v}_1^{\prime}}^\mathrm{T}\otimes \mathrm{I_n})$. Then, $\mathrm{M}(\alpha\circ \omega)$ can be reduced to $\mathrm{{\hat M}}{\hat\omega}$, where ${\hat\omega}$ and $\mathrm{\hat M}$ are obtained by eliminating component $i$ and column $i$ of ${\hat\omega}$ and $\mathrm{M}$, respectively, if ${\hat\omega}_i=0$. Thus, \eqref{M1} is deduced as
	\begin{equation}
		\label{sle}
		\mathrm{{\hat M}}{\hat \omega}=\rho({\cal A})\mathrm{v}_1^{\prime}.
	\end{equation}
	In other words, each solution of the equation \eqref{sle} with ${\hat \omega}$ as the unknown variable can be utilized to expanded into a communication weight matrix. The following proves that the linear equation \eqref{sle} has a solution.
	\par Tracing back to \eqref{M1}, according to Theorem \ref{THPF}, $\mathrm{v}_1^{\prime}>0$, which results in each column of matrix $\mathrm{M}$ having only one positive element. So, $\mathrm{\hat M}$ conforms to this characteristic. Furthermore, it is easy to verify that $\mathrm{rank}(\mathrm{{\hat M}})=\mathrm{rank}([\mathrm{{\hat M}}\,\rho({\cal A})\mathrm{v}_1^{\prime}])\le \mathrm{m}$, where $\mathrm{rank}(\bullet)$ denotes taking the rank of matrix $\bullet$, $\mathrm{m}$ is the number of communication links. Thus, the system of linear equations \eqref{sle} must have a solution.
	\par Besides, thanks to $\rho({\cal A})\mathrm{v}_1^{\prime}>0$, the solutions of the system of linear equations \eqref{sle} are positive according to Lemma \ref{lemeq}. 
	\par So far, Lemma \ref{lemwt} is proved.$\hfill\blacksquare$
\end{proof} 
\par Given that more than one solution to \eqref{sle} can be found by Lemma \ref{lemwt}, in this paper, a solution that satisfies the minimum 2-norm is adopted to continue the subsequent work. In this sense,
\begin{equation}
	\label{Wm}
	{\hat\omega}=\mathrm{{\hat M}}^+\rho({\cal A})\mathrm{v}_1^{\prime},
\end{equation}
where $\mathrm{{\hat M}}^+$ is the Moore–Penrose inverse of $\mathrm{{\hat M}}$.
\subsection{Design of Secondary Voltage Controller}
\begin{figure}
	\centering
	\includegraphics[width=6cm]{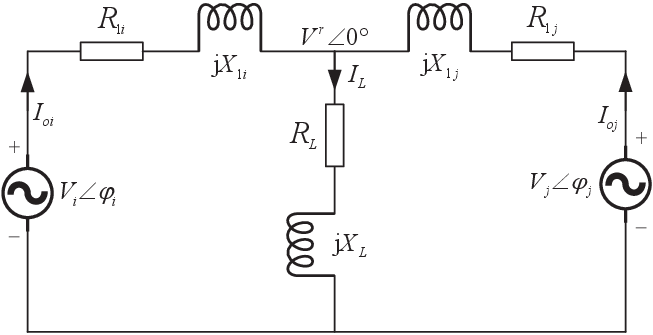}
	\caption{An equivalent circuit diagram of two parallel inverters.}
	\label{Eqc}
\end{figure}
\par According to in Fig. \ref{Eqc}, the voltage drop of DG $i$ on line impedance is
$$\Delta V_i=V_i-V^\mathrm{r}=\frac{P_i\mathrm{R}_{1i}+Q_i\mathrm{X}_{1i}}{V^\mathrm{r}}\neq 0.$$
In low-voltage microgrids, it is more practical to consider a general configuration, i.e., $\mathrm{K}_i^P\neq \mathrm{R}_{1i}$ and $\mathrm{K}_i^Q\neq \mathrm{X}_{1i}$. Denote $\mathrm{K}_i^P=\frac{1}{\delta_i^P}\mathrm{R}_{1i}$ and $K_i^Q=\frac{1}{\delta_i^Q}X_{1i}$, where $\delta_i^P$ and $\delta_i^Q$ are positive. Thus, in order to achieve power sharing, the voltage drop difference between two parallel inverters on the line is
$$\Delta V=\Delta V_i-\Delta V_j=(\delta_i^P-\delta_j^P)\frac{\mathrm{K}_i^PP_i}{V^\mathrm{r}}+(\delta_i^Q-\delta_j^Q)\frac{\mathrm{K}_i^QQ_i}{V^\mathrm{r}}\neq 0.$$
This indicates that in this case, $V_i=V_j$ is not valid for $\forall i,j\in\{1,2,\cdots,\mathrm{n}\}$, and the same goes for $V_i=V^\mathrm{r}$.
\begin{problem}
	\label{P3}
	For a power network containing $\mathrm{n}$ BESSs, it is always desirable to allocate reactive power to each according to its capacity, mathematically,
	$\lim\limits_{t\to+\infty}(\mathrm{K}_i^QQ_i-\mathrm{K}_j^QQ_j)=0,$
	where $\frac{1}{\mathrm{K}_i^Q}$ denotes the capacity of BESS $i$. Besides, in order to ensure power sharing, the difference between the output voltage of each BESS and the PCC needs to be no less than 0, i.e.,
	$\lim\limits_{t\to+\infty}(V_i-{\kappa}_iV^\mathrm{r})=0,$
	where ${\kappa}_iV^\mathrm{r}$ denotes the expected output voltage amplitude of BESS $i$.
\end{problem}
\par In the existing schemes, the output voltage of each BESS is consistent with the voltage at PCC, which makes it impossible to achieve power sharing. Then, to ensure the desired power flow, a novel scheme is designed as follows,
\begin{subequations}
	\begin{equation}
		\label{Vu}
		u_i^V=\mathrm{k}_1(\frac{1}{\rho({\cal A})}\sum_{j=1}^{\mathrm{n}}\mathrm{z}_{ij}\omega_{ij}^V{V}_j-{V}_i)+\mathrm{k}_1\mathrm{b}_{ii}(V_i-{\kappa}_iV^\mathrm{r}),
	\end{equation}
	\begin{equation}
		\label{Q}
		u_i^Q=\mathrm{k}_1(\frac{1}{\rho({\cal A})}\sum_{j=1}^{\mathrm{n}}\mathrm{z}_{ij}\omega_{ij}^QQ_j-Q_i),
	\end{equation}
\end{subequations}
where $\Omega^V=[\omega_{ij}^V]$ and $\Omega^Q=[\omega_{ij}^Q]$ are the communication weight matrices calculated by Lemma \ref{lemwt}, $\rho({\cal A})$ is the spectral radius of $\cal A$, $\mathrm{k}_1>0$ is a control gain.
\begin{theorem}
	By Lemma \ref{lemwt}, one can always find a communication weight matrices $\Omega^V=[\omega_{ij}^V]$ and $\Omega^Q=[\omega_{ij}^Q]$ that enables Problem \ref{P3} to be solved under the drive of \eqref{Vu} and \eqref{Q}.
\end{theorem}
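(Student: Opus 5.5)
The plan is to write the closed loop in stacked form and split Problem \ref{P3} into two parts. The reactive-power part should be a leaderless consensus whose limit is fixed by the leading eigenvector. The voltage part should be a pinned (leader-following) error system, shown stable through Lemmas \ref{nons}, \ref{Mm} and \ref{MH}.

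\textbf{Step 1: weights.} Set $W^Q={\cal A}\circ\Omega^Q$ and $W^V={\cal A}\circ\Omega^V$. By Lemma \ref{lemwt}, choose $\Omega^Q$ so that $W^Q$ has the leading eigenvector $\mathrm{v}^Q=[1/\mathrm{K}_1^Q,\cdots,1/\mathrm{K}_\mathrm{n}^Q]^\mathrm{T}>0$ with eigenvalue $\rho({\cal A})$. Likewise choose $\Omega^V$ so that $W^V$ has the leading eigenvector $\kappa=[\kappa_1,\cdots,\kappa_\mathrm{n}]^\mathrm{T}>0$. Both choices are admissible because the targets are positive vectors. Under Assumption \ref{assu1}, both $W^Q$ and $W^V$ are nonnegative and irreducible.

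\textbf{Step 2: reactive power.} From \eqref{Qu} and \eqref{Q},
$$\dot Q=\mathrm{k}_1\big(\tfrac{1}{\rho({\cal A})}W^Q-\mathrm{I_n}\big)Q=-\mathrm{k}_1 L^Q Q .$$
Here $L^Q=\mathrm{I_n}-W^Q/\rho({\cal A})$ is a singular irreducible M-matrix. By Theorem \ref{THPF}, its zero eigenvalue is simple, with right eigenvector $\mathrm{v}^Q$ and a positive left eigenvector $\nu^Q$. Every other eigenvalue $\lambda$ of $W^Q/\rho({\cal A})$ satisfies $|\lambda|\le 1$ and $\lambda\neq 1$. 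Hence $\mathrm{Re}(\lambda)<1$, except possibly for eigenvalues on the unit circle.

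Standard consensus arguments then give
$$Q(t)\to \mathrm{v}^Q\,\frac{{\nu^Q}^\mathrm{T}Q(0)}{{\nu^Q}^\mathrm{T}\mathrm{v}^Q}.$$
Therefore $\mathrm{K}_i^QQ_i$ converges to a common constant, which is the first requirement of Problem \ref{P3}.

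\textbf{Step 3: voltage error system.} Differentiating $V_i=V_{\mathrm{n},i}-\mathrm{K}_i^QQ_i$ and using \eqref{Vn} and \eqref{Qu}, the $\mathrm{K}_i^Qu_i^Q$ terms cancel, so $\dot V_i=u_i^V$. Define $e=V-\kappa V^\mathrm{r}$. Since $W^V\kappa=\rho({\cal A})\kappa$, the consensus term annihilates $\kappa V^\mathrm{r}$. The error dynamics are therefore
$$\dot e=-\mathrm{k}_1\big(L^V\mp{\cal B}\big)e, \qquad L^V=\mathrm{I_n}-W^V/\rho({\cal A}).$$
The sign in front of ${\cal B}$ follows the pinning convention in \eqref{Vu}; the pinning term has to act as $-\mathrm{k}_1\mathrm{b}_{ii}e_i$ in $\dot e_i$. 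Accordingly, the relevant matrix is $L^V+{\cal B}$.

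With only a single leader, ${\cal B}=\mathrm{e}_l\mathrm{e}_l^\mathrm{T}$, so $L^V+{\cal B}=L^V+xy^\mathrm{T}$ with $x=y=\mathrm{e}_l$. Because the left and right null vectors $\nu^V$ and $\kappa$ are positive, $({\nu^V}^\mathrm{T}x)(y^\mathrm{T}\kappa)=\nu^V_l\kappa_l\neq0$. Lemma \ref{nons} then says $L^V+{\cal B}$ is nonsingular. Lemma \ref{Mm}, applied to the sign-adjusted form, says it is a nonsingular irreducible M-matrix. Lemma \ref{MH} then yields that $-(L^V+{\cal B})$ is Hurwitz. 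Consequently $e\to0$, i.e. $V_i\to\kappa_iV^\mathrm{r}$, which is the second requirement of Problem \ref{P3}.

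\textbf{Main obstacle.} I expect two delicate points. The first is matching the sign convention of the pinning term and of Lemma \ref{Mm} (stated with $x,y\le0$) so that the Hurwitz conclusion really follows. The second is excluding other eigenvalues of $W^Q/\rho({\cal A})$ on the unit circle in Step 2. For the latter I would first try to show that the M-matrix $L^Q$ has all nonzero eigenvalues with positive real part: by the Perron--Frobenius bound, any $\lambda\neq1$ with $|\lambda|\le1$ has $\mathrm{Re}(\lambda)<1$. This suffices because the continuous-time system depends only on $\mathrm{Re}(1-\lambda)>0$, so periodicity of the graph does no harm.
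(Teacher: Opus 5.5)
Your proof is correct, and the voltage half is the paper's own argument. Your $-(L^V+{\cal B})$ is exactly the paper's $A-\mathrm{B}$ (with $A$ built from $\Omega^V$). Both proofs treat ${\cal B}$ as the single-leader rank-one matrix $\mathrm{e}_l\mathrm{e}_l^\mathrm{T}$ and pass through Lemmas \ref{nons}, \ref{Mm} (with $x=y=-\mathrm{e}_l$) and \ref{MH}.

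Your caution about the sign is needed, not cosmetic. With the pinning term literally as printed in \eqref{Vu}, the error obeys $\dot e=\mathrm{k}_1(A+\mathrm{B})e$. Then $A+\mathrm{B}$ has a positive eigenvalue: $\frac{1}{\rho({\cal A})}{\cal A}\circ\Omega^V+\mathrm{B}$ is irreducible and entrywise dominates, without equalling, a matrix of spectral radius one. The paper's $A-\mathrm{B}$ corresponds to the term $\mathrm{k}_1\mathrm{b}_{ii}(\kappa_iV^\mathrm{r}-V_i)$.

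The reactive-power half is where you take a different route. The paper forms the projected error $(\mathrm{I_n}-\mathrm{K}_Q\mathrm{K}_Q^\mathrm{T})Q$ and block-triangularizes it in a Jordan basis whose first column is $\mathrm{K}_Q$. You instead read the limit directly off the spectral projector of $-\mathrm{k}_1L^Q$. Your route buys three things:
\begin{itemize}
\item it gives the explicit limit;
\item it actually justifies that the non-leading modes are stable, which the paper only asserts;
\item it uses the right target vector. $\mathrm{v}^Q=[1/\mathrm{K}_1^Q,\cdots,1/\mathrm{K}_\mathrm{n}^Q]^\mathrm{T}$ is what equalizes $\mathrm{K}_i^QQ_i$. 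Taking $\mathrm{K}_Q$ itself as the leading eigenvector, as the paper does, would equalize $Q_i/\mathrm{K}_i^Q$ instead, and the paper's projector also tacitly needs $\|\mathrm{K}_Q\|=1$.
\end{itemize}

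Two details to tighten:
\begin{itemize}
\item Irreducibility of $W^Q$ and $W^V$ needs strictly positive weights on every edge. Lemma \ref{lemwt} as stated does not promise this, but the minimum-norm solution \eqref{Wm} supplies it: for a target $\xi>0$ it is $\omega_{ij}=\rho({\cal A})\xi_i\xi_j/\sum_k\mathrm{z}_{ik}\xi_k^2$ on each edge.
\item The bound $|\lambda|\le1$ uses $\rho({\cal A}\circ\Omega^Q)=\rho({\cal A})$. This follows by multiplying $W^Q\mathrm{v}^Q=\rho({\cal A})\mathrm{v}^Q$ on the left by the positive left Perron vector of $W^Q$.
\end{itemize}
Your caveat about unit-circle eigenvalues can be dropped, as you observe at the end.
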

\begin{proof}
	Define a control error vector as $e^Q=(\mathrm{I_n}-\mathrm{K}_Q\mathrm{K}^\mathrm{T}_Q)Q$, where $\mathrm{K}_Q=\mathrm{col}(\mathrm{K}_1^Q,\cdots,\mathrm{K}_\mathrm{n}^Q)$, $\mathrm{I_n}$ is an $\mathrm{n}$-dimensional identity matrix. In view of $(\mathrm{I}-\mathrm{K}_Q\mathrm{K}^\mathrm{T}_Q)\mathrm{K}_Q=0_\mathrm{n}$, the above can be yielded to
	$$\dot e^Q=\mathrm{k}_1(\mathrm{I_n}-\mathrm{K}_Q\mathrm{K}^\mathrm{T}_Q)Ae^Q,$$
	where $A=\frac{1}{\rho({\cal A})}({\cal A}\circ \Omega^Q)-\mathrm{I_n}$. According to Assumption \ref{assu1}, ${\cal A}\circ \Omega^Q$ is non-negative and irreducible. Furthermore, $-A$ is non-negative and irreducible. Denote $\Lambda_A^Q=\mathrm{diag}(\rho({\cal A}),\Lambda_2^Q,\cdots,\Lambda_n^Q)$ as an ordered Jordan canonical form of $A$, where $\Lambda_i$ is an Jordan block of $A$, $T^{-1}_QAT_Q=\Lambda_A$. Let ${\tilde e}^Q=T^{-1}_Qe^Q$. Then, one can get
	$$\begin{aligned}
		\dot{\tilde e}^Q&=\mathrm{k}_1T^{-1}_Q(\mathrm{I_n}-\mathrm{K}_Q\mathrm{K^T}_Q)AT_Q{\tilde e}^Q\\
		&=\mathrm{k}_1(\mathrm{I_n}-T^{-1}_Q\mathrm{K}_Q\mathrm{K^T}_QT_Q)\Lambda_A^Q{\tilde e}^Q.
	\end{aligned}$$
	It is noticed that $\mathrm{K}_Q=T_Q[1,0,\cdots,0]^\mathrm{T}$. Thus, 
	\begin{equation}
		\label{Rr}
		\mathrm{I_n}-T^{-1}_Q\mathrm{K}_Q\mathrm{K^T}_QT_Q=\left[\begin{matrix}
			0&-\nu\\
			0_{\mathrm{n}-1}&\mathrm{I_{n-1}}
		\end{matrix}\right],
	\end{equation}
	where $T_Q=[v_1v_2,\cdots,v_n]$, $\nu=[v_1^\mathrm{T}v_2,\cdots,v_1^\mathrm{T}v_\mathrm{n}]$. Denote $\Lambda^Q=\mathrm{diag}(\Lambda_2^Q,\cdots,\Lambda_\mathrm{n}^Q)$. The above can be deduced as
	$$\dot{\tilde e}^Q=\mathrm{k}_1\left[\begin{matrix}
		0&-\nu\Lambda^Q\\
		0_{n-1}&\Lambda^Q
	\end{matrix}\right]{\tilde e}^Q\textcolor{blue}{.}$$
	It can be easily observed that the eigenvalues of $\left[\begin{matrix}
		0&-\nu\Lambda^Q\\
		0_{n-1}&\Lambda^Q
	\end{matrix}\right]$ and $A$ are same and all of them have negative real parts, which causes ${\tilde e}^Q$ to asymptotically converge to 0.
	\par Define a control error vector $e^V=\kappa V^\mathrm{r}-V$. Find the derivative of $e^V$ as
	$$\dot e^V=\dot V=u_i^V=\mathrm{k}_1(A-\mathrm{B})e^V.$$
	The following demonstrates that $e^V_i$ for $\forall i\in\{1,2,\cdots,\mathrm{n}\}$ can converge to 0 by proving that $(A-\mathrm{B})$ is Hurwitz.
	\par Define two matrices to be used first as follows,
	$${\tilde A}=\rho({\cal A})\mathrm{I_n}-({\cal A}\circ\Omega)=-\rho({\cal A})A,$$
	$$F={\tilde A}+\rho({\cal A})\mathrm{bb^T}=-\rho({\cal A})(A-\mathrm{B}).$$
	Due to ${\cal A}\circ\Omega^V$ being an irreducible non negative matrix, ${\tilde A}$ has positive right and left eigenvectors related to zero eigenvalues. Therefore, according to Lemma \ref{nons}, matrix $F$ has all positive leading mirrors, which implies that $F$ is a irreducible M-matrix, i.e., $F>0$. According to Lemma \ref{MH}, $(A-\mathrm{B})$ is a Hurwitz matrix. That is, it can be asserted that $e^V_i$ for $i\in\{1,2,\cdots,\mathrm{n}\}$ asymptotically converges to 0.
	\par So far, Problem \ref{P3} is solved.
	$\hfill\blacksquare$
\end{proof}
\begin{remark}
	Unlike the previous solution designed in \cite{9531546,QIN2024110255,MO202448,9989432,9676705,10092457,10032563,LIU2023108679,9733168,10266581}, the application of \eqref{Vu} and \eqref{Q} can achieve voltage leader-following multiconsensus and reactive power leaderless multiconsensus with the preset steady states, thereby solving Problem \ref{P3}. The required communication weights can be calculated by the designed scheme according to specific requirements, and then induce the desired voltage and reactive power states using control protocols \eqref{Vu} and \eqref{Q}.
\end{remark}
\subsection{Design of A Controller to Achieve SoC Balance and Power Sharing}
\par Define the unit SoC decrease as
\begin{equation}
	E_{\mathrm{d},i}=\frac{E_0-E_i}{\mathrm{K}_i^P},
\end{equation}
where $E_{0i}$ is the SoC initial state of BESS $i$, $E_i$ is the SoC of BESS $i$. Thus, a linear second-order model as follows to describe the dynamics of BESS $i$, i.e.,
\begin{equation}
	\label{Ed}
	\left\{\begin{matrix}
		\dot E_{\mathrm{d},i}=P_i,\\
		\dot P_i=u_i.
	\end{matrix}\right.\
\end{equation}
\begin{problem}\label{P2}
	Under Assumption \ref{assu1}, in order to achieve SoC balance, i.e., $\lim\limits_{t\to+\infty}(E_i-E_j)=0$, for $i,j\in\{1,2,\cdots,\mathrm{n}\}$, the decrease in SoC level needs to achieve consensus, i.e.,
	$\lim\limits_{t\to+\infty}(\mathrm{K}_i^PE_{\mathrm{d},i}-\mathrm{K}_j^PE_{\mathrm{d},i})=0.$
	Besides, proportional power sharing can be achieved according to capacity, i.e.,
	$\lim\limits_{t\to+\infty}(\mathrm{K}_i^PP_i-\mathrm{K}_j^PP_j)=0.$
\end{problem}
\par The designed controller to solve Problem \ref{P2} is
\begin{equation}
	\label{Eu}
	u_i=\mathrm{k}_1(\sum_{j=1}^{\mathrm{n}}a_{ij}\omega_{ij}^PP_j-P_i)+\mathrm{k}_2(\sum_{j=1}^{\mathrm{n}}\mathrm{z}_{ij}\omega_{ij}^PE_{\mathrm{d},j}-E_{\mathrm{d},i}),
\end{equation}
where $\Omega^P=[\omega_{ij}^P]$ is the weight matrix obtained by Lemma \ref{lemwt}, $\mathrm{k}_1$ and $\mathrm{k}_2$ are control gains to be designed.
\begin{theorem}
	Consider a power network with $\mathrm{n}$ BESSs that follows Assumption \ref{assu1}, each of which follows the dynamics shown in \eqref{Ed}. Under the drive of controller \eqref{Eu}, Problem \ref{P2} can be solved using the communication weights calculated by Lemma \ref{lemwt}.
\end{theorem}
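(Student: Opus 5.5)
The plan is to write the closed loop of \eqref{Ed} under \eqref{Eu} in stacked form and reduce the question to the spectrum of a $2\mathrm{n}\times2\mathrm{n}$ block matrix. Take $\Omega^P$ from Lemma \ref{lemwt} with prescribed leading eigenvector $\mathrm{v}_1^{\prime}=\mathrm{col}(1/\mathrm{K}_1^P,\cdots,1/\mathrm{K}_\mathrm{n}^P)$. Normalize the weights so that the leading eigenvalue of $W=\frac{1}{\rho({\cal A})}({\cal A}\circ\Omega^P)$ is $1$; this is the same scaling used in \eqref{Vu} and \eqref{Q}, and reading \eqref{Eu} with this normalization, $a_{ij}=\mathrm{z}_{ij}$. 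Then $L=\mathrm{I_n}-W$ has a simple zero eigenvalue with right eigenvector $\mathrm{v}_1^{\prime}>0$ by Theorem \ref{THPF}. Every other eigenvalue $\lambda_i$ of $L$, $i\ge2$, has positive real part, because the corresponding eigenvalues of $W$ have modulus at most $1$ and differ from $1$. With $x=\mathrm{col}(E_\mathrm{d},P)$ the closed loop is
$$\dot x=\left[\begin{matrix}0&\mathrm{I_n}\\-\mathrm{k}_2L&-\mathrm{k}_1L\end{matrix}\right]x=:\Gamma x .$$
The targets of Problem \ref{P2} say exactly that $E_\mathrm{d}$ and $P$ asymptotically lie in $\mathrm{span}\{\mathrm{v}_1^{\prime}\}$, since $\mathrm{K}_i^PE_{\mathrm{d},i}$ and $\mathrm{K}_i^PP_i$ then coincide across $i$.

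Next, mimic the Jordan decomposition argument of the previous proof. Let $T^{-1}LT=\mathrm{diag}(0,J)$, where the first column of $T$ is $\mathrm{v}_1^{\prime}$ and $J$ collects the Jordan blocks of $\lambda_2,\dots,\lambda_\mathrm{n}$. Apply $\tilde x=(\mathrm{I}_2\otimes T^{-1})x$. The system then decouples into a zero-mode subsystem $\dot{\tilde E}_1=\tilde P_1,\ \dot{\tilde P}_1=0$ and the block-triangular subsystems
$$\left[\begin{matrix}0&1\\-\mathrm{k}_2\lambda_i&-\mathrm{k}_1\lambda_i\end{matrix}\right].$$
These blocks have characteristic polynomial $s^2+\mathrm{k}_1\lambda_is+\mathrm{k}_2\lambda_i$. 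Any off-diagonal Jordan entries only create an upper block-triangular structure, so stability of the blocks is enough. If the blocks are Hurwitz, the components of $E_\mathrm{d}$ and $P$ along $v_2,\dots,v_\mathrm{n}$ vanish. That leaves $E_\mathrm{d}\to(\tilde E_1(0)+\tilde P_1(0)t)\mathrm{v}_1^{\prime}$ and $P\to\tilde P_1(0)\mathrm{v}_1^{\prime}$. Multiplying componentwise by $\mathrm{K}_i^P$ cancels the $1/\mathrm{K}_i^P$ entries, so both differences in Problem \ref{P2} tend to zero even though the common value grows linearly in time.

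The main obstacle is choosing gains that make $s^2+\mathrm{k}_1\lambda s+\mathrm{k}_2\lambda$ Hurwitz for complex $\lambda$, since the directed graph makes $L$ non-symmetric. The first thing to try is the standard second-order consensus gain condition. Write $\lambda=\alpha+\mathrm{j}\beta$ with $\alpha>0$. Applying the complex Routh--Hurwitz criterion, or equivalently Hurwitz-testing the real $4\times4$ polynomial $p(s)\bar p(s)$, gives the condition
$$\frac{\mathrm{k}_1^2}{\mathrm{k}_2}>\max_{2\le i\le \mathrm{n}}\frac{\beta_i^2}{\alpha_i(\alpha_i^2+\beta_i^2)} .$$
I will derive this inequality carefully, which is the step deserving the most checking. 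The proof then states that any $\mathrm{k}_1,\mathrm{k}_2>0$ satisfying it, for instance $\mathrm{k}_2$ fixed and $\mathrm{k}_1$ large enough, solves Problem \ref{P2}.

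Finally, SoC balance follows from $E_i=E_{0}-\mathrm{K}_i^PE_{\mathrm{d},i}$ under a common initial SoC $E_0$. I will note in the proof that a common $E_0$ is the assumption implicit in the definition of $E_{\mathrm{d},i}$.
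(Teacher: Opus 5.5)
Your proposal is correct and is essentially the paper's argument. Both use a Jordan decomposition of the normalized matrix $\frac{1}{\rho({\cal A})}({\cal A}\circ\Omega^P)-\mathrm{I_n}$ to reduce the closed loop to $2\times2$ modal blocks, then apply a complex-coefficient Hurwitz test to the resulting quadratic; the only difference is that the paper tracks a projected error $e^x$, while you split off the zero mode explicitly. Keep your details, since they are the consistent ones: the agreement direction has to be $\mathrm{col}(1/\mathrm{K}_1^P,\cdots,1/\mathrm{K}_\mathrm{n}^P)$ rather than $\mathrm{K}_P$ for Problem \ref{P2}, and your condition $\mathrm{k}_1^2/\mathrm{k}_2>\max_i\beta_i^2/(\alpha_i(\alpha_i^2+\beta_i^2))$ is the correct form of the paper's printed inequality (there the term $\mathrm{k}_1\mathrm{k}_2\mathrm{a}_j^2$ should read $\mathrm{k}_1\mathrm{k}_2^2\mathrm{a}_j^3$, and the roles of $\mathrm{k}_1,\mathrm{k}_2$ are swapped relative to \eqref{Eu}).
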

\begin{proof}
	Define $x_i=\mathrm{col}(E_{\mathrm{d},i},P_i)$. Thus,
	$$\dot x_i=\mathrm{A}_xx_i+\mathrm{B}_xu_i,$$
	where $\mathrm{A}_x=\left[\begin{matrix}
		0&1\\
		0&0
	\end{matrix}\right]$, $\mathrm{B}_x=\left[\begin{matrix}
	0\\
	1
	\end{matrix}\right]$. Define a control error vector as $e^x=(\mathrm{I_n}-\mathrm{K}_P\mathrm{K}^\mathrm{T}_P)\otimes \mathrm{B}{\cal K}x$, where $\mathrm{K}_P=\mathrm{col}(\mathrm{K}_1^P,\cdots,\mathrm{K}_\mathrm{n}^P)$. Thus,
	$$\dot e^x=(\mathrm{I_n}-\mathrm{K}_P\mathrm{K^T}_P)\otimes \mathrm{A}_xe^x+(\mathrm{I_n}-\mathrm{K}_P\mathrm{K^T}_P)A\otimes \mathrm{B}_x{\cal K}e^x.$$
	Letting ${\tilde e}^x=(T^{-1}_x\otimes \mathrm{I}_2)e^x$, where $\mathrm{K}_P=T_x[1,0,\cdots,0]^\mathrm{T}$, one can get
	$$\begin{aligned}
		\dot{\tilde e}^x=&T^{-1}_x(\mathrm{I_n}-\mathrm{K}_P\mathrm{K}^\mathrm{T}_P)T_x\otimes \mathrm{A}_x{\tilde e}^x\\
		&+T^{-1}_x(\mathrm{I_n}-\mathrm{K}_P\mathrm{K}^\mathrm{T}_P)AT_x\otimes \mathrm{B}{\cal K}{\tilde e}^x\\
		=&(\mathrm{I_n}-T^{-1}_x\mathrm{K}_P\mathrm{K}^\mathrm{T}_PT_x)\otimes \mathrm{A}_x{\tilde e}^x\\
		&+(\mathrm{I_n}-T^{-1}_x\mathrm{K}_P\mathrm{K}^\mathrm{T}_PT_x)\Lambda_A^E\otimes \mathrm{B}{\cal K}{\tilde e}^x,
	\end{aligned}$$
	where $\Lambda_A^E=\mathrm{diag}(\rho({\cal A}),\Lambda_2^E,\cdots,\Lambda_\mathrm{n}^E)$. Define $\mathrm{Y}=(\mathrm{I_n}-T^{-1}_x\mathrm{K}_P\mathrm{K}^\mathrm{T}_PT_x)\otimes \mathrm{A}_x+(\mathrm{I_n}-T^{-1}_x\mathrm{K}_P\mathrm{K}^\mathrm{T}_PT_x)\Lambda^E\otimes \mathrm{B}{\cal K}$, where $\Lambda^E=\mathrm{diag}(\Lambda_2^E,\cdots,\Lambda_\mathrm{n}^E)$. Due to \eqref{Rr}, the matrix $Y$ is a lower triangle with block $j$ as, $j\in\{2,\cdots,\mathrm{n}\}$,
	$$\mathrm{Y}_j=\left[\begin{matrix}
		0&1\\
		\mathrm{k}_1\lambda_j&\mathrm{k}_2\lambda_j
	\end{matrix}\right],$$
	where $\lambda_j$ is an eigenvalue of $\Lambda^E$, and which has a characteristic polynomial as follows
	\begin{equation}
		\label{poly}
		s^2-\mathrm{k}_2\lambda_js-\mathrm{k}_1\lambda_j.
	\end{equation}
	\par From previous analysis, it can be seen that $\mathrm{Y}_1$ is related to the states of multiconsensus, while other blocks determine the stability of the system. Then, it is necessary to find $\mathrm{k}_1$ and $\mathrm{k}_2$ to ensure each block $\mathrm{Y}_j$, $j\in\{2,\cdots,\mathrm{n}\}$, is Hurwitz. The complex form of eigenvalue $j$ is denoted as $\lambda_j=\mathrm{a}_j+i\mathrm{b}_j$, where $i$ is the imaginary unit. Thus, \eqref{poly} can be rewritten as
	$$s^2-\mathrm{k}_2(\mathrm{a}_j+i\mathrm{b}_j)s-k_1(\mathrm{a}_j+i\mathrm{b}_j).$$
	According to the conclusion in \cite{ST1993}, the roots of this characteristic polynomial all have negative real parts if and only if
	\begin{equation}
		\left\{\begin{aligned}
			&\mathrm{k}_2\mathrm{a}_j<0, j\in\{2,\cdots,\mathrm{n}\}\\
			&\mathrm{k_1k_2a}_j^2+\mathrm{k_1^2b}_j^2+\mathrm{k_1k_2^2a}_j\mathrm{b}_j^2<0.
		\end{aligned}\right.
	\end{equation}
	\par So far, Problem \ref{P2} is solved.$\hfill\blacksquare$
\end{proof}
\section{Case Study}
\begin{figure}
	\centering
	\includegraphics[width=7cm]{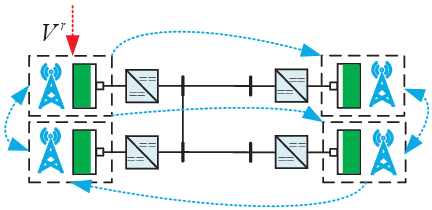}
	\caption{A 4-bus system and equipped communication network.}
	\label{57bus}
\end{figure}
\begin{figure}
	\centering
	\includegraphics[width=7cm]{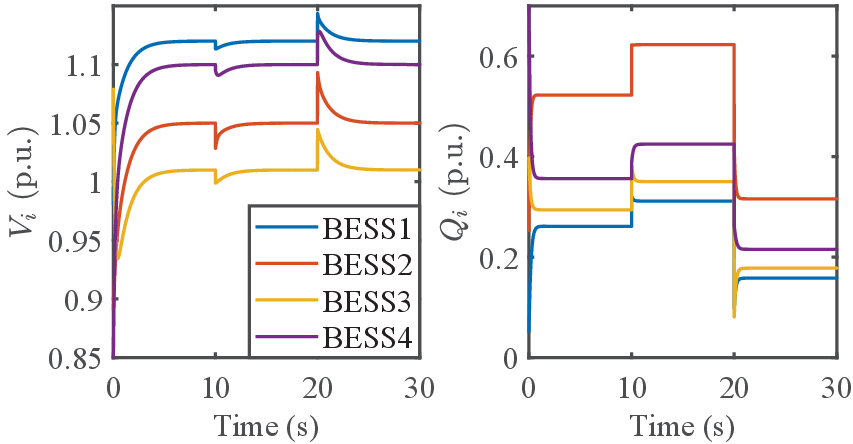}
	\caption{Evolution of voltage and reactive power of each BESS.}
	\label{voltage}
\end{figure}
To verify whether the designed method can solve Problems \ref{P3} and \ref{P2}, a low-voltage BESS network equipped by a communication topology shown in Fig. \ref{57bus} is selected for simulation, where $\kappa=\mathrm{col}(1.12,1.05,1.01,1.1)$, $\mathrm{K}_i^Q=\mathrm{col}(0.9,0.45,0.8,0.66)$, and $\mathrm{K}_i^P=\mathrm{col}(0.9,0.45,0.8,0.66)$. Meanwhile, some events are considered here. That is, there are a load increase $\mathrm{col}(0.06,0.1,0.09,0.03)$ and a load decrease $\mathrm{col}(0.11,0.1,0.14,0.08)$ at all load nodes in $\mathrm{p.u.}$.
\par From the simulation results in Fig. \ref{voltage}, it can be seen that the output voltage of each BESS can converge to the preset value. Meanwhile, reactive power can be evenly distributed according to its capacity. It needs to be emphasized again that using the designed scheme, in this simulation, only the agent managing BESS 1 can obtain its unique expected voltage value. Moreover, compared to the past, the voltage droop coefficient is no longer transmitted, which means it is no longer directly utilized by neighboring agents.
\par From the simulation results in Fig. \ref{SoCP}, it can be seen that the output power of all BESSs can be evenly distributed according to capacity, and SoC can also achieve balance. It is worth mentioning that SoC does not necessarily have to be sent to its neighbors, which is very different from the past. As long as the communication weights are designed properly, active power can still be evenly distributed according to capacity even without transmitting capacity information, which is another unique feature of this article compared to the past.
\begin{figure}
	\centering
	\includegraphics[width=7cm]{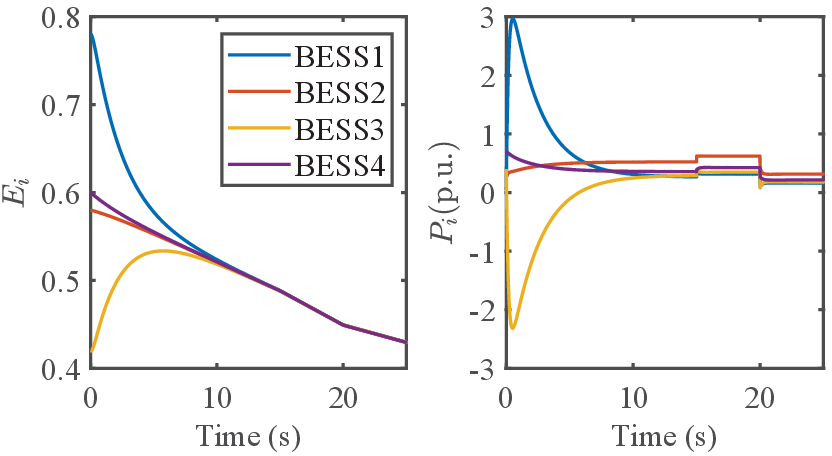}
	\caption{Evolution of SoC and active power of each BESS.}
	\label{SoCP}
\end{figure}
\begin{remark}
		It should be noted that the design of the distributed consensus scheme in this article is not intended to achieve better dynamic and steady-state performance than existing works, but to provide a more flexible and efficient solution for secondary voltage control and SoC balance. When implementing this scheme, capacity information/droop coefficient does not need to be transmitted, thereby reducing the pressure on the communication network. Through the designed scheme, the output voltage of each BESS can be precisely controlled. This is different from the result in \cite{10266581} because it does not design the secondary voltage control scheme from the perspective of prescribed steady state, and does not reveal how the voltage steady state is induced by underlying communication network, which is the core advantage of this work. In addition, although voltage multiconsensus control can be achieved under this scheme observed from the simulation results, voltage consensus \cite{QIN2024110255}/containment \cite{10266581} control still can be achieved as long as the communication weights are designed reasonably. The above analysis is summarized into a table as shown in Table I.
		\begin{table}[h]\label{tab0}
			\caption{Comparison with previous methods}
				\centering
				\begin{tabular}{cccc}
					\toprule
					Schemes &\makecell[c]{The one in\\this paper}&\makecell[c]{The one\\in \cite{QIN2024110255}}&\makecell[c]{The one\\in \cite{10266581}}\\
					\midrule
					Capacity information & $\times$ & $\checkmark$ & $\checkmark$ \\
					Consensus control & $\checkmark$ & $\checkmark$ & $\times$\\
					Multiconsensus control & $\checkmark$ &$\times$ & $\times$\\
					Containment control& $\checkmark$ & $\times$ & $\checkmark$\\
					Prescribed steady state&$\checkmark$ & $\checkmark$ & $\times$\\
					\bottomrule
			\end{tabular}
		\end{table}
\end{remark}
\section{Conclusion}
In this article, the distributed secondary voltage control and SoC balancing control of BESSs based on eigenvector centrality are investigated. Compared with existing results, by designing communication weights over a directed and connected graph, proportional sharing of active/reactive power and SoC balancing without the need to transmit capacity information are achieved, and similarly, secondary voltage control with preset steady state and single leader is achieved. It can be foreseen that this scheme can be expanded into finite/fixed/preset time schemes in the near future, and further combined with event triggering mechanisms to develop more efficient schemes.
\par However, in order to implement the currently designed solution, an aggregator is needed to master the communication topology and generate a reference voltage for DG as the leader of MASs. These slightly centralized methods need further improvement.
\bibliography{BESS_reference.bib}
\bibliographystyle{IEEEtran}
\end{document}